\documentclass[11pt]{article}
\usepackage{times}
\usepackage[english]{babel}
\usepackage[dvips]{graphicx}
\usepackage{amssymb}
\usepackage{amsfonts}
\usepackage{amsmath}
\usepackage{mathrsfs}
\usepackage{url} 
%% preamble.tex
%% this should be included with a command like
%% \input{preamble.tex}

\setlength{\evensidemargin}{.25in}
\setlength{\textwidth}{6in}
\setlength{\topmargin}{-0.4in}
\setlength{\textheight}{8.5in}

\newcommand{\handout}[5]{
   \renewcommand{\thepage}{#1-\arabic{page}}
   \noindent
   \begin{center}
   \framebox{
      \vbox{
    \hbox to 5.78in {{\sf CS 282A/MATH 209A: Foundations of Cryptography} \hfill \sf #2 }
       \vspace{4mm}
       \hbox to 5.78in { {\Large \hfill #5  \hfill} }
       \vspace{2mm}
       \hbox to 5.78in { {\em #3 \hfill #4} }
      }
   }
   \end{center}
   \vspace*{4mm}
}

\newtheorem{theorem}{Theorem}

\newtheorem{lemma}[theorem]{Lemma}

\newtheorem{claim}[theorem]{Claim}

\newcommand{\qed}{\rule{7pt}{7pt}}

\newenvironment{proof}{\noindent{\bf Proof}\hspace*{1em}}{\qed\bigskip}
\newenvironment{proof-sketch}{\noindent{\bf Sketch of Proof}\hspace*{1em}}{\qed\bigskip}
\newenvironment{proof-idea}{\noindent{\bf Proof Idea}\hspace*{1em}}{\qed\bigskip}
\newenvironment{proof-of-lemma}[1]{\noindent{\bf Proof of Lemma #1}\hspace*{1em}}{\qed\bigskip}
\newenvironment{proof-attempt}{\noindent{\bf Proof Attempt}\hspace*{1em}}{\qed\bigskip}

% \makeatletter
% \@addtoreset{figure}{section}
% \@addtoreset{table}{section}
% \@addtoreset{equation}{section}
% \makeatother

% \renewcommand{\thefigure}{\thesection.\arabic{figure}}
% \renewcommand{\thetable}{\thesection.\arabic{table}}
% \renewcommand{\theequation}{\thesection.\arabic{equation}}

\makeatletter
\def\fnum@figure{{\bf Figure \thefigure}}
\def\fnum@table{{\bf Table \thetable}}
\long\def\@mycaption#1[#2]#3{\addcontentsline{\csname
  ext@#1\endcsname}{#1}{\protect\numberline{\csname
  the#1\endcsname}{\ignorespaces #2}}\par
  \begingroup
    \@parboxrestore
    \small
    \@makecaption{\csname fnum@#1\endcsname}{\ignorespaces #3}\par
  \endgroup}
\def\mycaption{\refstepcounter\@captype \@dblarg{\@mycaption\@captype}}
\makeatother

\newcommand{\mathify}[1]{\ifmmode{#1}\else\mbox{$#1$}\fi}

\newcommand{\bigO}O

\usepackage{psbox, amssymb}

\textwidth=6in
\oddsidemargin=0.25in
\evensidemargin=0.25in
\topmargin=-0.1in
\footskip=0.8in
\parindent=0.0cm
\parskip=0.3cm
\textheight=8.00in
\setcounter{tocdepth} {3}
\setcounter{secnumdepth} {2}
\sloppy

% \P and \time are already used by LaTeX.  :-(

%\newcommand{\class}{\hbox{$\mathbb{C}$}}

\newcommand{\al}{{\rm AL}}

\def\pr{\mathbb{P}}
\def\E{\mathbb{E}}

\def\bern{\textrm{Bernoulli}}
\def\bin{{\rm Bin}}

\def\ind{\mathbb{I}}

\def\Hcall{\mathcal{H}}

\def\Ncal{\mathcal{N}}

\def\al{\alpha}
\def\bt{\beta}
\def\dl{\delta}
\def\lm{\lambda}

\def\eps{\epsilon}

\def\Lm{\Lambda}
\def\lm{\lambda}

\def\zl{\zeta_{\lm}}
\def\o{o}

\def\beq{\begin{equation}}
\def\eeq{\end{equation}}

\def\beqn{\begin{eqnarray}}
\def\eeqn{\end{eqnarray}}

% Comments for communicating with coauthors
\newif\ifnotesw\noteswtrue
\newcommand{\comment}[1]{\ifnotesw $\blacktriangleright$\ {\sf #1}\ 
  $\blacktriangleleft$ \fi}
%\noteswfalse   % turn off marginal notes for now
%%%%%%%%%%%%%

\def\whp{\textbf{whp}}

\addtolength{\oddsidemargin}{-1cm}
\addtolength{\evensidemargin}{-1cm}
\addtolength{\textwidth}{2cm}
\addtolength{\topmargin}{-2cm}
\addtolength{\textheight}{4.5cm}

\title{Component Evolution in General Random Intersection Graphs}
\author{Milan Bradonji\'c\thanks{Theoretical Division, and Center for Nonlinear Studies, Los Alamos National Laboratory, Los Alamos, NM 87545, USA,~\texttt{milan@lanl.gov},}, 
Aric Hagberg\thanks{Theoretical Division, Los Alamos National Laboratory, Los Alamos, NM 87545, USA,~\texttt{hagberg@lanl.gov},},
Nicolas W. Hengartner\thanks{Statistical Sciences Group 
Los Alamos National Laboratory, NM 87545, USA,~\texttt{nickh@lanl.gov},}, 
Allon G. Percus\thanks{School of Mathematical Sciences
Claremont Graduate University, Claremont, CA 91711, USA,~\texttt{allon.percus@cgu.edu}.
}
}

\date{}

\begin{document}

\maketitle

\begin{abstract}
Random intersection graphs (RIGs) are an important random structure with
applications in social networks, epidemic networks, blog readership, and
wireless sensor networks.  RIGs can be interpreted as a model for large
randomly formed non-metric data sets.
We analyze the component evolution in general RIGs,
and give conditions on existence and uniqueness of the giant component. 
Our techniques generalize existing methods for analysis of component
evolution:
we analyze survival and extinction properties of a dependent,
inhomogeneous Galton-Watson branching process on general RIGs.
Our analysis relies on bounding the branching processes and inherits the
fundamental concepts of the study of component evolution in
Erd\H{o}s-R\'enyi graphs.
%
%The main challenge comes from the underlying structure of RIGs,  where the number of offspring follows a binomial distribution with a different number of nodes and different rate at each step during the evolution.
%
The major challenge comes from the underlying structure of RIGs, 
which involves its both the set of nodes and the set of attributes, 
as well as the set of different probabilities among the nodes and attributes.

\vspace{0.2cm}
\noindent
\textbf{Keywords:} 
Random graphs, branching processes,
probabilistic methods,
random generation of combinatorial structures, 
stochastic processes in relation with random discrete structures.
\end{abstract}

\section{Introduction}
\label{sec.introduction}
%
\iffalse
\comment{Key words: 
- Random trees and graphs
- Branching processes
- Probabilistic methods for the analysis of algorithms
- New results concerning average case analysis of classical or new algorithms
- Random trees and graphs
- Branching processes
- Stochastic processes in relation with random discrete structures
- Random walks
- Discrete probabilities
- Random generation of combinatorial structures
}
\fi

Bipartite graphs, consisting of two sets of nodes with edges only
connecting nodes in opposite sets, are a natural representation for many
networks.  A well-known example is a collaboration graph, where
the two sets might be scientists and research papers, or actors and
movies~\cite{watts-1998-collective,newman-2001-scientific}.  Social
networks can often be cast as bipartite graphs since they are
built from sets of individuals connected to sets of attributes, such as
membership of a club or organization, work colleagues, or fans of the
same sports team.  Simulations of epidemic spread in
human populations are often performed on networks constructed from
bipartite graphs of people and the locations they visit during a
typical day~\cite{eubank-2004-modelling}.  
Bipartite structure, of course, is hardly limited to social networks.
The relation between nodes and keys in secure wireless communication,
for examples, forms a bipartite network~\cite{bloznelis-2009-component}.
In general, bipartite graphs are well suited to the problem of
classifying objects,
where each object has a set of properties~\cite{godehardt-2007-random}.
However, modeling such classification networks remains a challenge.
The well-studied Erd\H{o}s-R\'enyi model, $G_{n,p}$,
successfully used for average-case analysis of algorithm performance,
does not satisfactorily represent 
%In existing combinatorial methods for the analysis of algorithms, 
many randomly formed social or collaboration networks.
For example, $G_{n,p}$ does not
capture the typical scale-free degree distribution of many real-world
networks~\cite{barabasi-1999-emergence}.  
More realistic degree
distributions can be achieved by the configuration
model~\cite{newman-2001-random} or expected degree
model~\cite{chung-2002-average}, but even those fail to capture common
properties of social networks such as the high number of triangles (or
cliques) and strong degree-degree
correlation~\cite{newman-2003-why,albert-2002-statistical}.  

%\comment{Big leap, to be smoothened; It is not clear what the "problems" are.}
The most straightforward way of remedying these problems is to
characterize each of the bipartite sets separately.  One step in this
direction is an extension of the configuration model that specifies
degrees in both sets~\cite{guillaume-2006-bipartite}.
Another related approach is that of random intersection graphs (RIG),
first introduced in~\cite{singer-1995-thesis,karonski-1999-random}.
Any undirected graph can be represented as
an intersection graph~\cite{erdos-1966-representation}.
% have provided another set of models for social networks. 
The simplest version is the ``uniform'' RIG, $G(n,m,p)$, containing a
set of $n$ nodes and a set of $m$ attributes, where any given node-attribute
pair contains an edge with a fixed probability $p$, independently of
other pairs.  Two nodes in the graph are taken to be connected if
and only if they are both connected to at least one common element in
the attribute set.  In our work, we study the more general RIG,
$G(n,m,\boldsymbol{p})$
\cite{nikoletseas-2004-existence,nikoletseas-2008-large}, where
the node-attribute edge probabilities are not given by a uniform
value $p$ but rather by a set $\boldsymbol{p} = \{p_w\}_{w \in W}$: a node is attached to the 
attribute $w$, with probability $p_w$.
This general model has only recently been developed and only a few
results have obtained, such as expander properties, cover time, and  
the existence and efficient construction of large independent
sets~\cite{nikoletseas-2004-existence,nikoletseas-2008-large,spirakis-2009-expander}.

In this paper, we analyze the evolution of components in general RIGs.
Related results have previously been obtained for the uniform
RIG~\cite{behrisch-2007-component},
and for two uniform cases of the RIG model where a specific overlap threshold controls the
connectivity of the nodes, were analyzed in~\cite{bloznelis-2009-component}.
%The component evolution in RIG, in a more generalized model than one given in [BEHRISH], with a a threshold controlling the 
%connectivity of the nodes, 
%,gives the conditions for the component evolution for the two special cases  of the general intersection graph model.
%
Our main contribution is a generalization of the component evolution
on a general RIG.
We provide stochastic bounds, by analyzing the stopping time of the branching process on general RIG, 
where the history of the process is directly dictated by the structure of the general RIG.
%
%The number of offspring in the branching process follows a binomial distribution with a different number of nodes and different rate at each step during the evolution.
%$G_{n,p}$~\cite{alon-2000-probabilistic,hofstad-notes}, 
%analyzing the behavior of a branching process.
%The technical difficulty, in analyzing the stopping time of the branching process, rests on the fact that it depends on the history of the process. 
%
The major challenge comes from the underlying structure of RIGs, 
which involves both the set of nodes and the set of attributes, 
as well as the set of different probabilities $\boldsymbol{p} = \{p_w\}_{w \in W}$.
%with which each node is attached to the attributes.
%where 
%
%More precisely, the technical difficulty, in analyzing the stopping time of the branching process, rests on the fact that it depends on the history of the process. 
%
%Our main contribution in this work is in providing stochastic bounds on the branching process on general RIG, where the history of the process is directly dictated by the structure of the general RIG.
%
%
%
% Specifically, in order to study the component evolution we analyze the 
% branching processes (the behavior of a breadth-first search algorithm)
% when the number of offspring follows a binomial distribution with a
% different number of nodes and different rates at each step of the
% evolution.
%
%
%This approach generalizes the case where the number of offspring
%of the nodes at the same level of the branching tree follow the same
%distribution~\cite{broman-survival-2008}.

\section{Model and previous work}
%The study of RIGs originated in~\cite{karonski-1999-random, singer-1995-thesis}.
%
%In this paper we will use the notation for random intersection graphs introduced in~\cite{bloznelis-2009-component}, 
%
In this paper, we will consider the general intersection graph $G(n,m,\boldsymbol{p})$, introduced in~\cite{nikoletseas-2004-existence,nikoletseas-2008-large},  
with a set of probabilities $\boldsymbol{p} = \{p_w\}_{w \in W}$, where $p_w\in (0,1)$.
%~\footnote{We can eliminate the cases $p_w = 0$ and $p_w = 1$, since either none or all of the nodes $v$ would have the attribute $w$. 
%%and we are interested in graphs without isolated nodes or completelyconnected nodes. 
%Note that the $p_w$'s do not sum up to $1$.}.
We now formally define the model. 

\textbf{Model.} 
There are two sets: the set of
nodes $V = \{1,2,\dots,n\}$ and the set of attributes $W =
\{1,2,\dots,m\}$. 
%Every node $v \in V$ is assigned a random set of attributes $W(v) \subseteq W$. Formally,  
For a given set of probabilities $\boldsymbol{p} = \{p_w\}_{w \in W}$, independently over all $(v,w) \in V \times W$ let 
\beq
\label{eq.Avw}
A_{v,w} := \bern(p_w).
\eeq
Every node $v \in V$ is assigned a random set of attributes $W(v) \subseteq W$
\beq
\label{eq.Wv}
W(v) := \{w \subseteq W \mid A_{v,w} = 1\}.
\eeq
The set of edges in $V$ is defined such that two different nodes $v_i,v_j \in V$ are connected if and
only if 
\beq
\label{eq.connectivity}
|W(v_i) \cap W(v_j)| \geq s,
\eeq
for a given integer $s \geq 1$.
%
%
%
%We consider the general case when $p_w$ are not necessarily the same, and for simplicity we fix $s=1$.

In our analysis, $p_w$ are not necessarily the same as in~\cite{behrisch-2007-component, bloznelis-2009-component}~\footnote{Note that $p_w$'s do not sum up to $1$. Moreover, we can eliminate the cases $p_w = 0$ and $p_w = 1$. These two cases respectively correspond when none or all nodes $v$ are attached to the attribute $w$.}, 
%%and we are interested in graphs without isolated nodes or completelyconnected nodes. 
and for simplicity we fix $s=1$.

%
%First, the degree structure was been computed in~\cite{jaworksi-2008-vertex}.
%

The component evolution of the uniform model $G(n,m,p)$ was
analyzed by Behrisch in~\cite{behrisch-2007-component}, for the case when the
scaling of nodes and attributes is $m=n^{\alpha}$, with $\alpha \ne 1$ 
and $p^2 m = c/n$. Theorem 1 in~\cite{behrisch-2007-component} states 
that the size of the largest component $\Ncal(G(n,m,p))$ in RIG satisfies 
(i) $\Ncal(G(n,m,p)) \leq \frac{9}{(1-c^2)}\log n$, for $\al>1, c<1$, 
(ii) $\Ncal(G(n,m,p)) = (1+o(1))(1-\rho)n$, for $\al>1, c>1$,   
(iii) $\Ncal(G(n,m,p)) \leq \frac{10 \sqrt{c}}{(1-c^2)}\sqrt{\frac{n}{m}} \log m$, for $\al<1, c<1$,
(iv) $\Ncal(G(n,m,p)) =(1+o(1))(1-\rho) \sqrt{c m n}$, for $\al<1, c>1$,   
where $\rho$ is the solution in $(0,1)$ of the equation $\rho = \exp(c(\rho-1))$.

%
%The analysis on the component evolution of RIG, was developed further in~\cite{bloznelis-2009-component}, for the case when $s \geq 1$.
The component evolution for the case $s \geq 1$ in the relation 
 $|W(u) \cap W(v)| \geq s$ is considered in~\cite{bloznelis-2009-component}, where the following two RIG models are analyzed:
(1) $G_s(n,m,d)$ model, where $\pr[W(v) = A] = {m \choose d }^{-1}$ for all $A \subseteq W$ on $d$ elements, for a given $d$;
(2) $G'_s(n,m,p)$ model, where $\pr[W(v) = A] = p^{|A|}(1-p)^{m-|A|}$ for all $A \subseteq W$.
In light of results of~\cite{behrisch-2007-component}, it has been shown in~\cite{bloznelis-2009-component}, that for 
$d = d(n), p=p(n), m=m(n), n = o(m)$, where $s$ is a fixed integer, and $d^{2s} \sim c m^s s!/n$,
the largest component in $G_s(n,m,d)$ satisfies: 
(i)  $\Ncal(G_{s}(n,m,d))  \leq \frac{9}{(1-c^2)}\log n$, for $c<1$,
(ii) $\Ncal(G_{s}(n,m,d))   = (1+o(1))(1-\rho)n$, for $c>1$, 
in the case when $n \log n = o(m)$ for $s=1$ and $n = o(m^{s/(2s-1)})$ for $s\leq 2$.
The same results for the giant component in $G_s(n,m,p)$ still hold for the case when 
$p^{2s} = cs! / m^s n$ and $n = o(m^{s/(2s-1)})$, see~\cite{bloznelis-2009-component}.
%
%Our results will be consistent with the previous results, 
%given that we analyze the case of the general $\{p_w\}_{w \in W}$, with the restriction of $s=1$ as in~\cite{behrisch-2007-component}. 
 
%For $d \approx mp $ the models are `similar': $G_s(n,m,d) \approx G'_s(n,m,p)$.
Both $G_s(n,m,d)$ and $G'_s(n,m,p)$
are special cases of a more general class studied in~\cite{godehardt-2001-two},
where the number of attributes of each node is assigned
randomly as in the bipartite configuration model.
That is, for a given probability distribution $(P_0, P_1, \dots, P_m)$, 
we have $\pr[|W(v)| = k] = P_k$ for all $0 \leq k \leq m$, and moreover given the size $k$, all of the sets $W(v)$ are equally probable,
that is for any $A \subseteq W$, $\pr[W(v) = A : |W(v)| = k] = {m \choose k}^{-1}$.
That is, we see that $G_s(n,m,d)$ is equivalent to the model of~\cite{godehardt-2001-two}
with the delta-distribution, where the probability of the
$d$-th coordinate is $1$, while $G'_s(n,m,d)$ is equivalent to the
model of~\cite{godehardt-2001-two} with the $\bin(m,p)$ distribution.
%
%
\iffalse
Comparing $G(n,m,\boldsymbol{p})$, the general RIG model, with the model
of~\cite{godehardt-2001-two}, it follows that general RIG model does not perform a ``uniform sampling'',
\comment{This `uniform sampling' needs the explanation.} 
as the model of~\cite{godehardt-2001-two}, 
%which we explain in (\ref{eq.non.unif.sampling}), Section~\ref{sec.math.prelim.}. 
%
%To explain non-uniform sampling of the general RIG model, note that 
since for a given set of probabilities $\{p_w\}_{w \in W}$, in general,
\begin{equation}
\label{eq.non.unif.sampling}
\nonumber 
\pr[W(v)=A \mid |W(v)| = k]   = \frac{\pr[W(v)=A, |W(v)| = k]}{\pr[|W(v)| = k]} = 
\frac{\prod_{\al \in A} p_{\al} \prod_{\bt \notin A} (1-p_{\bt})}{\sum_{A \subseteq W, |A|= k }\prod_{\al \in A} p_{\al} \prod_{\bt \notin A}(1-p_{\bt})} \neq {m \choose k}^{-1}. %\frac{1}{{m \choose k}}.
\end{equation}
% 
\fi
To complete the picture of previous work, in~\cite{deijfen-2009-random}, it was shown that when $n=m$ a set of probabilities $\boldsymbol{p} = \{p_w\}_{w \in W}$ can be chosen to tune the degree and clustering coefficient of the graph.

\iffalse
Let us now compare the general RIG $G(n,m,\boldsymbol{p})$ with the model
of~\cite{godehardt-2001-two}. Given the set of probabilities $\{p_w\}_{w \in W}$,
\begin{equation}
\label{eq:prob.distr} P_k = \sum_{A \subseteq W: |A|= k } \pr[A] =
\sum_{A \subseteq W: |A|= k } \prod_{\al \in A} p_{\al} \prod_{\bt
\notin A} (1-p_{\bt}) = \prod_{i=1}^m(1-p_i) \prod_{\al \in A} \frac{p_{\al}}{1-p_{\al}}.
\end{equation} 
%where $T= \prod_{i=1}^m(1-p_i)$, which only depends on $\boldsymbol{p}$.  
Moreover, 
\begin{equation}
\pr[W(v)=A : |W(v)| = k]   = \frac{\pr[W(v)=A, |W(v)| = k]}{\pr[|W(v)| = k]} = \frac{\prod_{\al \in A} p_{\al} \prod_{\bt \notin A} (1-p_{\bt})}{\sum_{A \subseteq W: |A|= k }\prod_{\al \in A} p_{\al} \prod_{\bt \notin A}(1-p_{\bt})}.
\end{equation}
%
It follows that for any set $\boldsymbol{p}$ in $G(n,m,\boldsymbol{p})$
there is a probability distribution $\mathcal{P} = (P_0, P_1, \dots, P_m)$.
The other direction is not true, in general. For every
$\mathcal{P}$, there is not a $\boldsymbol{p}$ as the following simple
example shows.  Let $\mathcal{P} = (1/2,1/4,1/4)$. Then from Eq.~(\ref{eq:prob.distr}) it follows that $p_{1/2} = (1 \pm i
\sqrt{7})/8$, or $p_{1/2} = (1 \mp i \sqrt{7})/8$), but $p_1,p_2$ are
real numbers from $[0,1]$.
\fi

%To complete the picture of previous work, 
%we mention that 
%the component structure for the special case of $\alpha=1$, not addressed
%in~\cite{behrisch-2007-component}
%was resolved in~\cite{lageras-2008-note}. In addition,
%for nonuniform $p$ 

\section{Mathematical preliminaries}
\label{sec.math.prelim.}
%The purpose of this paper is to develop the methods for the analysis on general RIGs and in particular to analyze the component evolution of the general RIG structure.
In this paper, we analyze the component evolution of the general RIG structure.
As we have already mentioned, the major challenge comes from the underlying structure of RIGs, 
which involves both the set of nodes and the set of attributes, 
as well as the set of different probabilities $\boldsymbol{p} = \{p_w\}_{w \in W}$.

Moreover, the edges in RIG are not independent. Hence, a RIG cannot be treated as an Erd\H{o}s-R\'enyi random graph $G_{n,\hat p}$, with the edge probability $\hat p = 1 - \prod_{w \in W} (1-p_w^2)$.  
However, in~\cite{fill-2000}, the authors provide the comparison among $G_{n,\hat p}$ and $G(n,m,p)$, showing that for $m = n^\alpha$ and $\al > 6$, 
these two classes of graphs have asymptotically the same properties. 
In~\cite{rybarczyk-2009}, Rybarczyk has recently shown the equivalence of sharp threshold functions among $G_{n,\hat p}$ and $G_{n,m,p}$, when $m \geq n^3.$
In this work, we do not impose any constraints among $n$ and $m$, and we develop methods for the analysis of branching processes on RIGs, 
since the existing methods for the analysis of branching processes on $G_{n,p}$ do not apply.

We now briefly state the edge dependence. Consider  three distinct nodes $v_i,v_j,v_k$ from $V$.  
Conditionally on the set $W(v_k)$, by the definition~(\ref{eq.Wv}),
%the random sets $W(v_i)$ and $W(v_j)$ are conditionally independent, since given $W(v_k)$, 
the sets $W(v_i) \cap W(v_k)$ and $W(v_j) \cap W(v_k)$ are mutually independent,
which implies conditional independence of
the events $\{v_i \sim v_k \mid W(v_k)\}, \{v_j \sim v_k \mid W(v_k)\}$, that is, 
\beq
\label{eq:two-nodes-ind}
\pr [ v_i \sim v_k , v_j \sim v_k \mid W(v_k) ]  = \pr [ v_i \sim v_k \mid W(v_k) ] \pr [ v_j \sim v_k \mid W(v_k) ].
\eeq
However, the latter does not imply independence of the
events $\{v_i \sim v_k\}$ and $\{v_j \sim v_k\}$ since in general
\begin{eqnarray}
\label{eq.rig.depend.}
\nonumber \pr [ v_i\sim v_k , v_j \sim v_k  ] &=& {\mathbb E}[\pr [ v_i \sim v_k , v_j \sim v_k \mid W(v_k) ] \\
\nonumber &=& {\mathbb E} \left [\pr [ v_i \sim v_k \mid W(v_k) ] \pr [ v_j \sim v_k \mid W(v_k) ] \right ]\\
&\neq& {\mathbb P}[v_i \sim v_k]{\mathbb P}[v_j \sim v_k].
\end{eqnarray}
Furthermore, the conditional pairwise independence (\ref{eq:two-nodes-ind}) 
does not extend to 
three or more nodes. Indeed, conditionally on the set $W(v_k)$, the sets  
$W(v_i) \cap W(v_j), W(v_i) \cap W(v_k)$, and $W(v_j) \cap W(v_k)$ are not mutually independent,
and hence neither are the events $\{v_i \sim v_j\}, \{v_i\sim v_k\}$, and  $\{v_j \sim v_k\}$,
that is,
\beq
\label{eg:tree-nodes-dep}
\pr[ v_i\sim v_j, v_i \sim v_k, v_j \sim v_k \mid W(v_k) ] \neq \pr[ v_i \sim v_j \mid W(v_k) ] \pr[ v_i \sim v_k \mid W(v_k)] \pr[ v_j \sim v_k \mid W(v_k)].
\eeq

We now provide two identities, which we will use  throughout this paper.
For any $w \in W$, let $q_{w} := 1 - p_{w}$, and define
$\prod_{\al \in \emptyset } q_{\al}  = 1$. 

\begin{claim} \label{claim:1}
For any node $u \in V$ and given set  $A \subseteq W$,
\begin{equation} 
\label{eq:prob.A.empty}
\pr[W(u) \cap A = \emptyset | A ] = \prod_{\al \in A} (1-p_{\al}) = \prod_{\al \in A} q_{\al}.
\end{equation}
\end{claim}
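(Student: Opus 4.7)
The plan is to unfold the definitions and use independence of the underlying Bernoulli variables $A_{v,w}$.

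First I would rewrite the event $\{W(u) \cap A = \emptyset\}$ in terms of the indicator variables from equation~(\ref{eq.Avw}). By the definition~(\ref{eq.Wv}) of $W(u)$, an attribute $\alpha$ belongs to $W(u)$ precisely when $A_{u,\alpha} = 1$. Hence $W(u) \cap A = \emptyset$ is equivalent to $A_{u,\alpha} = 0$ for every $\alpha \in A$, i.e.\ to the intersection of events $\bigcap_{\alpha \in A}\{A_{u,\alpha} = 0\}$.

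Next I would invoke the independence assumption stated in the Model section: the family $\{A_{v,w} : (v,w) \in V \times W\}$ is mutually independent, with $A_{u,\alpha} \sim \mathrm{Bernoulli}(p_\alpha)$. Conditioning on $A$ does not affect the distribution of the variables $\{A_{u,\alpha}\}_{\alpha \in W}$, since $A$ is only a subset of attributes (a deterministic or at worst externally supplied object independent of node $u$'s Bernoulli draws). Therefore
\[
\pr[W(u) \cap A = \emptyset \mid A] = \prod_{\alpha \in A} \pr[A_{u,\alpha} = 0] = \prod_{\alpha \in A}(1-p_\alpha) = \prod_{\alpha \in A} q_\alpha,
\]
using the convention $\prod_{\alpha \in \emptyset} q_\alpha = 1$ to cover the degenerate case $A = \emptyset$.

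There is really no obstacle here: the claim is essentially a restatement of independence of the Bernoulli entries in the bipartite incidence matrix. The only subtlety worth flagging in the write-up is the edge case $A = \emptyset$, handled by the stated empty-product convention, and the observation that conditioning on $A$ is innocuous because $A$ is disjoint from the randomness generating $W(u)$.
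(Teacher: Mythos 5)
Your proof is correct and follows essentially the same route as the paper's: unfold the event $\{W(u)\cap A=\emptyset\}$ into $\bigcap_{\alpha\in A}\{\alpha\notin W(u)\}$ and factor using the mutual independence of the Bernoulli indicators $A_{u,\alpha}$. Your additional remarks on the empty-product convention and on why conditioning on $A$ is innocuous are fine but do not change the argument.
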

\begin{proof}
Write
\beq
\nonumber
\pr[W(u) \cap A = \emptyset | A ] = \pr[\forall \alpha \in A , \alpha \notin W(u) | A] =  
\prod_{\al \in A} \pr[\alpha \notin W(u) ]
= \prod_{\al \in A} (1-p_{\al}) = \prod_{\al \in A} q_{\al},
\eeq
%\begin{eqnarray*}
%\pr[W(u) \cap A = \emptyset ] &=& \pr[\forall \alpha \in A , \alpha \notin W(u)] =  
%\prod_{\al \in A} \pr[\alpha \notin W(u) ]\\ 
%&=& \prod_{\al \in A} (1-p_{\al}) = \prod_{\al \in A} q_{\al}
%\end{eqnarray*}
which is the desired expression.
\end{proof}

\begin{claim}
\label{cl:AB}
For any node $u \in V$, and given sets $A \subseteq B \subseteq W$,
\begin{equation}
\pr[W(u) \cap A = \emptyset, W(u) \cap B \neq \emptyset | A, B] =
 \Big( \prod_{\al \in A} q_{\al} \Big) \Big( 1 - \prod_{\al \in B \setminus A} q_{\beta} \Big) \nonumber = \prod_{\al \in A} q_{\al} - \prod_{\bt \in B} q_{\bt}.
\end{equation}
\end{claim}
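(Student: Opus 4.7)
The plan is to reduce Claim 2 to Claim 1 by splitting $B$ into the disjoint union $A \cup (B \setminus A)$, using the fact that the attribute-incidence variables $\{A_{u,w}\}_{w \in W}$ defined in~(\ref{eq.Avw}) are mutually independent across $w$. This independence is what makes intersection events on disjoint attribute subsets independent.

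First I would rewrite the event of interest on the disjoint decomposition. Since $B = A \sqcup (B \setminus A)$, the condition $W(u) \cap B \neq \emptyset$ together with $W(u) \cap A = \emptyset$ is logically equivalent to $W(u) \cap (B \setminus A) \neq \emptyset$ together with $W(u) \cap A = \emptyset$. So
\beq
\nonumber
\pr[W(u)\cap A=\emptyset,\, W(u)\cap B\neq\emptyset \mid A,B] = \pr[W(u)\cap A=\emptyset,\, W(u)\cap (B\setminus A)\neq\emptyset \mid A,B].
\eeq

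Second, I would invoke the independence of $\{A_{u,w}\}_{w}$ across disjoint attribute sets: the event $\{W(u)\cap A=\emptyset\}$ depends only on $\{A_{u,\al}\}_{\al\in A}$, while $\{W(u)\cap (B\setminus A)\neq\emptyset\}$ depends only on $\{A_{u,\beta}\}_{\beta\in B\setminus A}$. These two families are independent, so the joint probability factors. Applying Claim~\ref{claim:1} to each factor (and using the complement for the second one) gives
\beq
\nonumber
\Big(\prod_{\al\in A} q_\al\Big)\Big(1 - \prod_{\beta\in B\setminus A} q_\beta\Big).
\eeq

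Finally, expanding the product and recombining yields $\prod_{\al\in A} q_\al - \prod_{\al\in A} q_\al \prod_{\beta\in B\setminus A} q_\beta = \prod_{\al\in A} q_\al - \prod_{\beta\in B} q_\beta$, where the last equality uses again the disjoint decomposition $B = A \sqcup (B\setminus A)$.

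There is no real obstacle here; the only subtlety worth stating cleanly is the equivalence of the events in the first step, since once $W(u)\cap A = \emptyset$ is imposed, any element of $W(u)\cap B$ must come from $B\setminus A$. The proof reduces to one application of Claim~\ref{claim:1}, its complement, and the independence of Bernoulli incidences on disjoint attribute subsets.
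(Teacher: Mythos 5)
Your proof is correct and follows essentially the same route as the paper's, which simply notes that $A$ and $B \setminus A$ are disjoint and invokes Claim~\ref{claim:1}; you have merely spelled out the independence and complementation steps that the paper leaves implicit. No gaps.
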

\begin{proof}
The sets $A$ and $B \setminus A$ are disjoint.  The result follows from (\ref{eq:prob.A.empty}).
\end{proof}

%WE DO NOT NEED THIS EQUATION -- MAYBE ONLY TO COMMENT THAT RIG IS NOT EQUAL OF Gnp WITH THIS PROBABILITY
\iffalse
It follows from (\ref{eq:prob.A.empty}) that for any node $u,v \in V$, 
$\pr[u \sim v | W(v)] = 1 - \prod_{\al \in W(v)} q_{\al}$.
%
Taking the expectation over $W(u)$ yields
\begin{equation}
\label{eq.connection}
\pr[u \sim v ] = \sum_{W(u) \subseteq  W} \pr[W(u)] \Big(1 - \prod_{\al \in W(v)} q_{\al}\Big) = 1 - \prod_{w \in W} (1-p_w^2).
\end{equation}
\fi

\iffalse
It follows from (\ref{eq:prob.A.empty}) that for any node $u,v \in V$, 
\begin{equation}
\pr[u \sim v | W(v)] = 1 - \prod_{\al \in W(v)} q_{\al}.
\end{equation}
Taking the expectation over $W(u)$ yields
\begin{equation}
\pr[u \sim v ] = \sum_{W(u) \subseteq  W} \pr[W(u)] \Big(1 - \prod_{\al \in W(v)} q_{\al}\Big) = 1 - \prod_{w \in W} (1-p_w^2).
\end{equation}
\fi

%NEW VERSION OF THE SECTION
\section{Auxiliary process on general random intersection graphs}
\label{sec:branching.process}
Our analysis for the emergence of a giant component is inspired by the
approach described in~\cite{alon-2000-probabilistic}. 
%The idea is to define an auxiliary process related to a breath-first search (BFS) 
%
The difficulty in analyzing the evolution of the 
stochastic process defined by equations~(\ref{eq.Avw}),~(\ref{eq.Wv}), 
and~(\ref{eq.connectivity}) resides in the fact that we need, at least in 
principle, to keep track of the temporal evolution of the sets of nodes and 
attributes being explored. This results in a process that is not Markovian.  

We construct an auxiliary process, which starts at an arbitrary node $v_0 \in V$, %chosen u.a.r. among all nodes, 
and reaches zero for the first time in a number of steps equal to the size of the
component containing $v_0$.  The process is algorithmically defined as follows.

%At time step $t=0$, draw a node $v_0$, uniformly at random from $V$. 
%Obtain its feature set $W(v_0) = \{w \in W \mid A_{v_0,w} = 1\}$, see~(\ref{eq.Wv}).

\textbf{Auxiliary Process.}
Let us denote by $V_t$ the cumulative set of nodes \textit{visited} by time $t$, which we initialize
to $V_0=\{v_0\}$, and set $W(v_0) = \{ v \not = v_0 : W(v) \cap W(v_0) \not = \emptyset \}$.  
Starting with $Y_0=1$, the process evolves as follows:  For $t=1,2,3,\dots,n-1$ and $Y_t>0$,
pick a node $v_t$ uniformly at random from the set  $V \setminus V_{t-1}$ and update
the set of visited nodes $V_t = V_{t-1} \cup \{v_t\}$.  Denote by 
 $W(v_t) = \{w \in W \mid A_{v_t,w} = 1\}$ the set of features associated to node
$v_t$, and define
\[
Y_t = \left | \Big\{ v \in V \setminus V_t \mid W(v) \cap \cup_{\tau=0}^t W(v_\tau)\neq \emptyset \Big\}
\right |.
\]
The random variable $Y_t$ counts the number of nodes outside the set of 
visited nodes $V_t$ that are connected to $V_t$.   Following
\cite{alon-2000-probabilistic}, we call $Y_t$ the number of \textit{alive} nodes
at time $t$.  We note that we do not need to keep track of the actual list of neighbors of  $V_t$
\begin{equation} \label{eq:neighbor-set}
\Big\{ v \in V \setminus V_t \mid W(v) \cap \cup_{\tau=0}^t W(v_\tau)\neq \emptyset \Big\},
\end{equation}
as in~\cite{alon-2000-probabilistic}, because every node in $V \setminus V_t$ is equally likely 
to belong to the set (\ref{eq:neighbor-set}).  As a result, each time we need a random node
from (\ref{eq:neighbor-set}),  we pick a node uniformly at random form $V \setminus V_t$.

To understand why this process is useful, notice that by time $t$, we know that the size
of the component containing $v_0$ is at least as large as the number of visited nodes 
$V_t$ plus the number $Y_t$ of neighbors of $V_t$ not yet visited.  Once the number $Y_t$
of neighbors connected to $V_t$ but not yet visited drops to zero, the size of $V_t$ 
is equal to the size of the component containing $v_0$.  We formalize this last statement
by introducing the stopping time 
\begin{equation} \label{eq:stopping.time}
T(v_0) = \inf \{ t > 0: Y_t = 0 \},
\end{equation}
whose value is $|C(v_0)|$.

Finally, our analysis of that process requires us to keep track of the history of the
feature sets uncovered by the process
\beq
\label{eq:history}
%\Hcall_t = \{v_0,v_1, \dots, v_t, W_{[0]}, W_{[1]}, \dots, W_{[t]}\}.
\Hcall_t = \{W(v_0), W(v_1), \dots, W(v_t)\}.
\eeq

\subsection{Process description in terms of random variable $Y_t$}
As in~\cite{bloznelis-2009-component}, we denote the cumulative 
feature set associated to the sequence of 
nodes $v_0, \dots, v_t$ from the auxiliary process by
\beq
W_{[t]} := \cup_{\tau=0}^{t} W(v_\tau). 
\eeq
We will characterize the process $\{Y_t\}_{t \geq 0}$ in terms of the number $Z_t$ of newly 
discovered neighbors to $V_t$.  The latter is directly related to the increment, defined by
of the process $Y_t$
\beq
\label{eq.increment.yt}
Z_t = Y_t - Y_{t-1} + 1,
\eeq
where the term +1 reflects the fact that one node,  $Y_{t-1}$ decreases by one when the 
node $v_t$ becomes a visited node at time $t$.  The events that any given node, 
which is neither visited nor alive, becomes
\textit{alive} at time $t$ are conditionally independent given the history 
${\mathcal H}_t$, since each event involves a different subsets of  the 
indicator random variables $\{ A_{v,w} \}$.   In light of Claim~\ref{cl:AB},
the conditional probability that a node $u$ becomes  alive at time $t$ is
\begin{eqnarray}
\label{eq:rate}
\nonumber r_t &:=& \pr[u \sim v_t, u \not\sim v_{t-1}, u \not\sim v_{t-2}, \dots, u \not\sim v_0 | \Hcall_t ] \\
\nonumber &=& \pr[W(u) \cap W(v_t) \neq \emptyset, W(u) \cap W_{[t-1]} = \emptyset | \Hcall_t ]\\ 
\nonumber &=& \pr[W(u) \cap W(v_t) \neq \emptyset, W(u) \cap W_{[t-1]} = \emptyset | W(v_t), W_{[t-1]} ]\\ 
\nonumber &=& \prod_{\al \in W_{[t-1]}} q_{\al} - \prod_{\bt \in W_{[t]}} q_{\bt}\\
	  &=& \phi_{t-1} - \phi_{t},
\end{eqnarray}
where we set $\phi_{t} := \prod_{\al \in W_{[t]}} q_{\al}$,
and use the convention  $W_{[-1]} = W(\emptyset) \equiv \emptyset$ 
and $\phi_{-1} \equiv 1$.   Observe that the probability (\ref{eq:rate}) 
does not depend on $u$.    Hence the 
number of new alive nodes at time $t$ is, conditionally on the history $\Hcall_t $,
a Binomial distributed random variable with parameters $r_t$
and 
\beq
\label{eq:nt} 
N_t = n - t - Y_{t}.
\eeq
Formally,
\begin{equation} \label{eq:zt+1}
Z_{t+1} | \Hcall_t \sim \mbox{Bin}(N_t,r_t).
\end{equation}
This allows us to describe the distribution of $Y_t$ in the next lemma.

\begin{lemma}
\label{lm:nt}
For times $t \geq 1$, the number of alive nodes satisfies
\begin{equation}
Y_t  | \Hcall_{t-1} \sim  \bin \Big(n-1, 1-\prod_{\tau = 0}^{t-1}(1-r_\tau) \Big) -t +1.
\end{equation} 
\end{lemma}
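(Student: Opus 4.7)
The plan is to unroll the recursion (17) backward to $t=0$ and then identify the resulting sum of increments as a single binomial built by progressive ``thinning.''  Since $Y_0 = 1$, iterating $Y_t = Y_{t-1}+Z_t-1$ gives
\[
Y_t \;=\; 1 - t + \sum_{\tau=1}^{t} Z_\tau,
\]
so the shift $-t+1$ in the statement is automatic, and it suffices to prove
\[
\sum_{\tau=1}^{t} Z_\tau \;\Big|\; \Hcall_{t-1} \;\sim\; \bin\!\left(n-1,\; 1-\prod_{\tau=0}^{t-1}(1-r_\tau)\right).
\]

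I would carry this out by induction on $t$.  The base case $t=1$ is immediate: equation (21) gives $N_0 = n - 0 - Y_0 = n-1$, and equation (23) then yields $Z_1 \mid \Hcall_0 \sim \bin(n-1,r_0)$, matching the claim since $1-(1-r_0) = r_0$.  For the inductive step, set $S_{t-1}:=\sum_{\tau=1}^{t-1} Z_\tau$ and assume $S_{t-1} \mid \Hcall_{t-1} \sim \bin(n-1,p_{t-1})$ with $p_{t-1}=1-\prod_{\tau=0}^{t-2}(1-r_\tau)$.  The key observation is that (21) rewrites as $N_{t-1} = n-(t-1)-Y_{t-1} = (n-1)-S_{t-1}$, i.e.\ $N_{t-1}$ is exactly the number of trials among the $n-1$ original ones not yet ``counted'' by the inductive binomial.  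Equation (23) then gives
\[
Z_t \mid \Hcall_{t-1}, S_{t-1} \;\sim\; \bin\bigl((n-1)-S_{t-1},\; r_{t-1}\bigr),
\]
and the classical two-stage binomial identity (if $S\sim\bin(N,p)$ and $Z\mid S\sim\bin(N-S,q)$ then $S+Z\sim\bin(N,\,1-(1-p)(1-q))$) yields $S_t \mid \Hcall_{t-1} \sim \bin(n-1,\, 1-(1-p_{t-1})(1-r_{t-1}))$.  Telescoping $1-(1-p_{t-1})(1-r_{t-1})=1-\prod_{\tau=0}^{t-1}(1-r_\tau)$ closes the induction, and shifting by $1-t$ delivers the lemma.

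The main obstacle is the bookkeeping around conditional $\sigma$-algebras, not the algebra.  The rates $r_\tau$ are themselves random variables (each $r_\tau=\phi_{\tau-1}-\phi_\tau$ is a function of $W(v_0),\dots,W(v_\tau)$), so the thinning identity must be applied pathwise: on each realization of $\Hcall_{t-1}$ the product $\prod_{\tau=0}^{t-1}(1-r_\tau)$ is a deterministic number and the target distribution is a genuine binomial.  This is legitimate because $r_\tau$ for $\tau\le t-1$ is $\Hcall_\tau$-measurable (hence $\Hcall_{t-1}$-measurable), and, crucially, the Bernoullis $\{A_{u,w}\}_{u\notin V_{t-1},\,w\in W}$ that drive $Z_t$ live on rows of the $V\times W$ indicator array disjoint from those exposed in $\Hcall_{t-1}$, so they are independent of the conditioning.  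Under this independence the conditional distributions chain together exactly as in the two-stage binomial identity, and the induction proceeds cleanly to the stated result.
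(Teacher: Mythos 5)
Your proposal is correct and follows essentially the same route as the paper: induction on $t$ combined with a two-stage binomial composition, applied pathwise since the rates $r_\tau$ are $\Hcall_{t-1}$-measurable. Your ``two-stage binomial identity'' for $S+Z$ is just the complementary form of the paper's Lemma~\ref{lm:binbin} (tracking the cumulative count $S_{t-1}=(n-1)-N_{t-1}$ instead of the neutral count $N_{t-1}$), so the two arguments coincide up to this relabeling.
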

The proof of this lemma requires us to establish the following result first.
\begin{lemma}
\label{lm:binbin}
Let random variables $\Lm_1, \Lm_2$ satisfy: $\Lm_1 \sim \bin(m,\nu_1)$ and $\Lm_2 \textrm{ given } \Lm_1 \sim \bin(\Lm_1,\nu_2)$.
%\begin{eqnarray}
%\nonumber \Lm_1 &\sim& \bin(m,\nu_1)\\
%\nonumber \Lm_2 \textrm{ given } \Lm_1 &\sim& \bin(\Lm_1,\nu_2).
%\end{eqnarray}
Then marginally $\Lm_2 \sim \bin(m, \nu_1\nu_2)$ and 
$\Lm_1 - \Lm_2 \sim \bin(m, \nu_1 (1-\nu_2))$.
\end{lemma}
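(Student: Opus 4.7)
The plan is to prove the lemma by a coupling/thinning construction, realizing the pair $(\Lm_1,\Lm_2)$ jointly as deterministic functionals of $2m$ independent Bernoulli random variables, after which each marginal is immediately recognized as a binomial by counting how many of $m$ independent binary trials land in a specified cell.

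Concretely, I would introduce mutually independent random variables $X_1,\dots,X_m \sim \bern(\nu_1)$ and $Y_1,\dots,Y_m \sim \bern(\nu_2)$, and define $\tilde\Lm_1 := \sum_{i=1}^m X_i$ and $\tilde\Lm_2 := \sum_{i=1}^m X_i Y_i$. By construction $\tilde\Lm_1 \sim \bin(m,\nu_1)$, and conditionally on the vector $(X_1,\dots,X_m)$ the sum $\tilde\Lm_2$ consists of exactly $\tilde\Lm_1$ independent $\bern(\nu_2)$ terms, hence $\tilde\Lm_2 \mid \tilde\Lm_1 \sim \bin(\tilde\Lm_1,\nu_2)$. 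Because the conditional law of $\tilde\Lm_2$ given $(X_1,\dots,X_m)$ depends on this vector only through $\tilde\Lm_1$ (by exchangeability of the $Y_i$), the joint distribution of $(\tilde\Lm_1,\tilde\Lm_2)$ coincides with that prescribed in the hypothesis, so it suffices to read off the marginals of $\tilde\Lm_2$ and $\tilde\Lm_1-\tilde\Lm_2$ from this realization.

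To conclude, observe that for each $i$ the product $X_i Y_i$ is itself $\bern(\nu_1\nu_2)$, and these products are independent across $i$ because they involve disjoint pairs of independent Bernoullis. Hence $\tilde\Lm_2 = \sum_{i=1}^m X_i Y_i \sim \bin(m,\nu_1\nu_2)$. Applying the same reasoning to $X_i(1-Y_i) \sim \bern(\nu_1(1-\nu_2))$ gives $\tilde\Lm_1 - \tilde\Lm_2 = \sum_{i=1}^m X_i(1-Y_i) \sim \bin(m,\nu_1(1-\nu_2))$. There is no serious obstacle; the only subtle point is verifying that the thinning construction faithfully reproduces the conditional law assumed in the lemma, which is immediate from the exchangeability of the indices $i=1,\dots,m$. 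As a sanity check one could alternatively derive the marginal of $\Lm_2$ analytically from
\[
\pr[\Lm_2 = k] = \sum_{j=k}^m \binom{m}{j}\nu_1^j(1-\nu_1)^{m-j}\binom{j}{k}\nu_2^k(1-\nu_2)^{j-k},
\]
applying the identity $\binom{m}{j}\binom{j}{k} = \binom{m}{k}\binom{m-k}{j-k}$ and re-indexing to obtain $\binom{m}{k}(\nu_1\nu_2)^k(1-\nu_1\nu_2)^{m-k}$, with the marginal of $\Lm_1-\Lm_2$ following symmetrically by swapping $\nu_2 \leftrightarrow 1-\nu_2$.
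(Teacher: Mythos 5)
Your proposal is correct and is essentially the same coupling/thinning argument as the paper's proof, which realizes the Bernoulli indicators as $\mathbb{I}(U_k \leq \nu_1)$ and $\mathbb{I}(V_k \leq \nu_2)$ for i.i.d.\ uniforms and identifies the product with a $\bern(\nu_1\nu_2)$ indicator. Your write-up is in fact slightly more complete, since you explicitly verify that the construction reproduces the assumed conditional law and spell out the $\Lm_1-\Lm_2$ marginal via $X_i(1-Y_i)$, steps the paper leaves implicit.
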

\begin{proof}
Let $U_1,\ldots,U_m$ and $V_1,\ldots,V_m$ be i.i.d. Uniform$(0,1)$ random variables.
Writing
\beq
\nonumber
\Lm_1 \stackrel{d}{=} \sum_{j=1}^m {\mathbb I}(U_j \leq \nu_1) \quad
\mbox{and}  \quad 
\Lm_2 | \Lm_1  \stackrel{d}{=}  \sum_{k : U_k < \nu_1} {\mathbb I}(V_k \leq \nu_2),
\eeq
we have that 
\beq
\nonumber
\Lm_2  \stackrel{d}{=}  \sum_{k=1}^m {\mathbb I}(U_k \leq \nu_1) {\mathbb I}(V_k \leq \nu_2)\\
\stackrel{d}{=} \sum_{k=1}^m {\mathbb I}(U_k \leq \nu_1 \nu_2),
\eeq
from which the conclusion follows.
\end{proof}

\begin{proof}(Proof of Lemma~\ref{lm:nt})
We prove the assertion on the Lemma by induction in $t$. 
For $t=0$, $Y_0 = 1$ and $t=1$,  $Y_1 = Z_1 ~\sim \bin(n-1,r_0)$. 
Hence, the Lemma is true for $t = 1$ and $t=0$.  
Assume that the assertion is true for some $t\geq 1$,
\begin{equation}
%\label{eq:ind.eq.1}
%N_t | \Hcall_{t-1}\sim \bin\big(n-1, \prod_{\tau = 0}^{t-1}(1-r_\tau)\big).
Y_t  | \Hcall_{t-1} \sim  \bin \Big(n-1, 1-\prod_{\tau = 0}^{t-1}(1-r_\tau) \Big) -t +1.
\end{equation}
From (\ref{eq:zt+1}), we have $Z_{t+1} | \Hcall_t \sim \bin(N_t,r_t) = \bin(n-t -Y_t,r_t)$, Now, from~(\ref{eq.increment.yt})
and Lemma~\ref{lm:binbin}, it follows 
\begin{equation}
%\label{eq:ind.eq.1}
Y_{t+1}  | \Hcall_{t} \sim  \bin \Big(n-1, 1-\prod_{\tau = 0}^{t}(1-r_\tau) \Big) - t.
\end{equation}
Hence, by mathematical induction, the Lemma holds for any $t \geq 0$.
\end{proof}

\subsection{Expectation and variance of $\phi_t$}

The history ${\mathcal H}_t$ embodies the evolution of how the features 
are discovered over time.   It is insightful to recast that history in terms
of the discovery times  $\Gamma_w$ of each feature in $W$.   Given any sequence
of nodes $v_0,v_1,v_2,\ldots$, the probability that a given feature $w$
is first discovered at time $t < n$ is
\begin{eqnarray*}
{\mathbb P}[\Gamma_w = t ] & = &
{\mathbb P}[A_{v_t,w}=1,A_{v_{t-1},w}=0,\ldots,A_{{v_0},w}=0] \\
&= & p_w (1-p_w)^t.
\end{eqnarray*}
If a feature $w$ is not discovered by time $n-1$, we set $\Gamma_w = \infty$ and note that
\beq
\nonumber
{\mathbb P}[\Gamma_w = \infty ] = (1-p_w)^n.
\eeq
From the independence of the random variables $A_{v,w}$, it follows that 
the discovery times $\{ \Gamma_w :  w\in W \}$ are independent.
We now focus on describing the distribution of $\phi_t = \prod_{\al \in W_{[t]}} q_{\al}$.
For $t \geq 0$, we have
\beq
\label{eq:phi.geom.}
\phi_t = \prod_{\al \in W_{[t]}} q_\al = \prod_{j=0}^t \prod_{\al \in s(v_j) \setminus S[j-1]} q_\al 
\stackrel{d}{=} \prod_{j=0}^t \prod_{w \in W} q_w^{{\mathbb I}(\Gamma_w=j)}
= \prod_{w \in W} q_w^{ \ind(\Gamma_w \leq t)}. 
\eeq
Using the fact that for a $B \sim \textrm{Bernoulli}(r)$, the expectation ${\mathbb E}[a^B] = 1-(1-a)r$, 
we can easily calculate the expectation of $\phi_t$
\begin{eqnarray}
\label{eq.phigeomexp}
\nonumber \E[\phi_t] &=& \E[ \prod_{w \in W} q_w^{ \ind(\Gamma_w \leq t)} ] = \prod_{w \in W} \Big(1- (1-q_w)\pr[\Gamma_w \leq t ]\Big)\\
&=&  \prod_{w \in W} \Big(1 -(1-q_w)(1-q_w^{t+1}) \Big).
\end{eqnarray}
The concentration of $\phi_0$ will be crucial for the analysis of the supercritical regime, Subsection~\ref{sub.supercritical.regime}. Hence, we here provide $\E[\phi_0]$ and $\E[\phi_0^2]$. 
From (\ref{eq.phigeomexp}) it follows
\beq
\label{eq.phi.zero.exp.}
\E[\phi_0]  = \prod_{w \in W} (1 - p_w^2) = 1 - \sum_{w \in W} p_w^2 + \o(\sum_{w \in W} p_w^2).
\eeq
Moreover, from (\ref{eq:phi.geom.}) it follows 
\begin{eqnarray}
\label{eq.phi.zero.sqaure.exp.}
\nonumber \E[\phi_0^2] &=& \E[ \prod_{w \in W} q_w^{ 2\ind(\Gamma_w \leq 0)} ] = \prod_{w \in W} \Big(1- (1-q_w^2)\pr[\Gamma_w = 0]\Big) = \prod_{w \in W} \Big(1- (1-q_w^2) p_w \Big) \\
%&=& \prod_{w \in W} \Big(1- 2 p_w^2 + p_w^3 \Big) = 1 - 2 \sum_{w \in W} p_w^2 +  \sum_{w \in W}  p_w^3 + \o(\sum_{w \in W} 2p_w^2 - p_w^3).
&=& \prod_{w \in W} \Big(1- 2 p_w^2 + p_w^3 \Big) = 1 - 2 \sum_{w \in W} p_w^2  + \o(\sum_{w \in W} p_w^2).
\end{eqnarray}

\section{Giant component}
\label{sec:giant.component}
With the process $\{Y_t\}_{t \geq 0}$ defined in the previous section, we analyze
both the subcritical and supercritical regime of our random intersection graph
by adapting the percolation based techniques to analyze Erd\H{o}s-R\'enyi random graphs~\cite{alon-2000-probabilistic}.   
The technical difficulty in analyzing that stopping time rests in the fact that the 
distribution of $Y_t$ depends on the history of the process,
dictated by the structure of the general RIG.
In the next two subsections, we will give conditions on 
non-existence, that is, on existence and uniqueness of the giant component in general RIGs. 
%
%$\{p_w : w \in W \}$ ensuring that 
%$\pr[|C(v)| \geq K \log n] < n^{-(1 + \eps)}$ (subcritical regime) 
%and $\pr[|C(v)| < K n] < XXXX$ (supercritical regime).

\subsection{Subcritical regime}
\label{sub.subcritical.regime}
\begin{theorem}
\label{thm.sub.subcritical.regime}
Let 
\beq
\nonumber
\sum_{w \in W}  p_w^3 = O(1/n^2) \quad \mbox{ and } \quad p_w = O(1/n) \mbox{ for all } w. 
\eeq
For any positive constant $c<1$, if $\sum_{w \in W}  p_w^2 \leq c/n$,  then all components in a general random intersection graph $G(n,m,\boldsymbol{p})$ are of order $O(\log n)$, with high probability\footnote{We will use the notation ``with high probability'' and denote as \whp, meaning with probability $1- o(1)$, as the number of nodes $n \to \infty$.}.
\end{theorem}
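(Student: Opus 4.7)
The overall strategy is to control the stopping time $T(v_0)$ of the auxiliary process from Section~\ref{sec:branching.process} and then take a union bound over the $n$ possible roots $v_0$. Specifically, I plan to show that for $K=K(c)$ a sufficiently large constant, $\pr[T(v_0)>K\log n] = o(1/n)$; since $T(v_0)=|C(v_0)|$, this will yield that every component has size at most $K\log n$ \whp. Because the process stops the first time $Y_t$ hits zero, $\{T(v_0)>T\}=\{Y_T>0\}$, and by Lemma~\ref{lm:nt} the problem reduces to an upper-tail estimate for $\bin(n-1,1-\phi_{T-1})-T+1$ with $T=K\log n$.

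The central difficulty is that the binomial parameter $1-\phi_{T-1}$ is itself random, with intricate dependence on the history $\Hcall_{T-1}$. I will exploit the discovery-time representation~(\ref{eq:phi.geom.}): since $\phi_{T-1}=\prod_{w\in W}q_w^{\ind(\Gamma_w\leq T-1)}$ and the $\Gamma_w$ are \emph{independent} across features, the elementary inequality $1-x\leq-\log x$ gives
\beq
\nonumber
1-\phi_{T-1} \leq \sum_{w\in W}(-\log q_w)\,\ind(\Gamma_w\leq T-1) =: X_{T-1},
\eeq
a sum of independent weighted Bernoullis with $\pr[\Gamma_w\leq T-1]=1-(1-p_w)^T\leq Tp_w$. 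Because $p_w=O(1/n)$, $-\log q_w = p_w(1+o(1))$, so $\E[X_{T-1}] \leq (1+o(1))T\sum_w p_w^2 \leq (1+o(1))\,cT/n$, which will play the role of the mean-offspring estimate familiar from Erd\H{o}s--R\'enyi analysis.

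Conditioning on $\Hcall_{T-1}$ and applying Chernoff to $\bin(n-1,1-\phi_{T-1})$, then taking expectation over the history and using $1-\phi_{T-1}\leq X_{T-1}$, will give for any $\theta>0$
\beq
\nonumber
\pr[Y_T>0] \leq e^{-\theta T}\,\E\bigl[\exp\bigl((e^\theta-1)(n-1)X_{T-1}\bigr)\bigr] \leq e^{-\theta T}\prod_{w\in W}\bigl(1+Tp_w(e^{\lambda_w}-1)\bigr),
\eeq
where $\lambda_w=(e^\theta-1)(n-1)(-\log q_w)=O(\theta)$ thanks to $p_w=O(1/n)$. Using $e^{\lambda_w}-1\leq\lambda_w e^{\lambda_w}$ together with $\sum_w p_w^2\leq c/n$ will produce a bound of the form
\beq
\nonumber
\pr[Y_T>0] \leq \exp\bigl(-T\theta\bigl[1-c\,e^\theta(1+o(1))\bigr]\bigr).
\eeq
Since $c<1$, the bracket is strictly positive for every sufficiently small $\theta>0$, say at least $\eta(c)>0$; choosing $K$ so that $K\eta>2$ gives $\pr[Y_{K\log n}>0]\leq n^{-2}=o(1/n)$, and a union bound over the $n$ starting vertices finishes the proof.

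The hardest step I anticipate is the moment generating function calculation: because the binomial parameter is a random function of history, the textbook Chernoff bound cannot be invoked directly, and the two-level argument must carry the MGF of $X_{T-1}$ through the exponent. The hypothesis $p_w=O(1/n)$ will be used essentially to keep $\lambda_w=O(\theta)$ uniformly, so that the second-order error in $e^{\lambda_w}-1\leq\lambda_w e^{\lambda_w}$ remains controlled; the hypothesis $\sum_w p_w^3=O(1/n^2)$ ensures $\sum_w p_w(-\log q_w) = (1+o(1))\sum_w p_w^2$, preserving the clean mean-offspring estimate $cT/n$ that drives the subcritical conclusion.
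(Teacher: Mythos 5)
Your proposal is correct, and its skeleton coincides with the paper's: reduce to the stopping time $T(v_0)=|C(v_0)|$, use Lemma~\ref{lm:nt} to turn $\{Y_T>0\}$ into an upper-tail event for a binomial whose success probability is bounded via $1-\prod_{\tau}(1-r_\tau)\leq 1-\phi_{T-1}$ as in (\ref{eq.rate.upper.bound}), exploit the discovery-time representation (\ref{eq:phi.geom.}) with the independence of $\{\Gamma_w\}$ across features, linearize $-\log q_w = p_w(1+o(1))$, and finish with a union bound over the $n$ roots. Where you genuinely diverge is the handling of the random binomial parameter. The paper introduces a truncation level $(1-\nu)t/n$, splits $\{Y_t>0\}$ according to whether $1-\phi_{t-1}$ exceeds it, treats the good event by a Chernoff bound on a binomial with a \emph{deterministic} parameter, and controls the bad event by Bernstein's inequality applied to $\sum_w X_{t,w}$ (this is where $p_w=O(1/n)$ bounds the max term and $\sum_w p_w^3=O(1/n^2)$ bounds the variance). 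You instead carry the randomness of $\phi_{T-1}$ directly through a single exponential-moment computation, bounding $\E[\exp((e^\theta-1)(n-1)X_{T-1})]$ as a product over features; this dispenses with the case split, the free parameter $\nu$, and Bernstein's inequality, at the price of slightly more delicate MGF bookkeeping (keeping $\lambda_w=O(\theta)$ uniformly, which is exactly where you use $p_w=O(1/n)$). Both routes work; the paper's two-stage concentration of $\phi_{t-1}$ has the side benefit of being recycled in the supercritical argument, whereas your one-shot bound yields a cleaner final exponent $-T\theta(1-ce^{O(\theta)}(1+o(1)))$. Two minor points to make explicit in a full write-up: $\{T(v_0)>T\}$ is contained in (not equal to) $\{Y_T>0\}$, which is the direction you need; and replacing the parameter $1-\prod_\tau(1-r_\tau)$ of Lemma~\ref{lm:nt} by the larger $1-\phi_{T-1}$, and then by $X_{T-1}$, relies on the monotonicity of the conditional Chernoff bound in the success probability, which you use implicitly.
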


\begin{proof}
We generalize the techniques used in the proof for the sub-critical case in $G_{n,p}$ presented in~\cite{alon-2000-probabilistic}.  Let $T(v_0)$ be the stopping time define in (\ref{eq:stopping.time}),
for the process starting at node $v_0$ and note that $T(v_0)=|C(v_0)|$.
We will bound the size of the largest component, and prove that under the conditions of the theorem, all  components are of order $O(\log n)$,  \whp.
%with high probability. 

For all $t \geq 0$,
\begin{eqnarray}
 \nonumber
{\mathbb P}[ T(v_0)>t  ] & = & {\mathbb E} \left [ \pr\left[T(v_0)>t \mid \Hcall_t \right]  \right ] \leq
 {\mathbb E} \left [  \pr[Y_t>0 \mid \Hcall_t ] \right ]\\ 
&=& {\mathbb E} \left [ \pr[\bin(n-1,1 - \prod_{\tau=0}^{t-1} (1-r_\tau)) \geq t \mid \Hcall_t ] \right ].
\label{eq:prob.stopping.time}
\end{eqnarray}
Bounding from above, which can easily be proven by induction in $t$ for $r_\tau \in [0,1]$, we have
\beq
\label{eq.rate.upper.bound}
1 - \prod_{\tau=0}^{t-1} (1-r_\tau) \leq \sum_{\tau=0}^{t-1} r_\tau = \sum_{\tau=0}^{t-1}
(\phi_{\tau-1}-\phi_\tau) = 1 - \phi_{t-1}.
\eeq
By using stochastic ordering of the Binomial distribution, both in $n$ and in $\sum_{\tau=0}^{t-1} r_\tau$, 
and for any positive constant $\nu$, which is to be specified later, it follows
\beqn
\label{eq:prtt}
\nonumber \pr[T(v_0) >t \mid \Hcall_t ] &\leq& \pr[\bin(n, \sum_{\tau=0}^{t-1} r_\tau) \geq t \mid \Hcall_t ] = \pr[\bin(n, 1 - \phi_{t-1}) \geq (1-\nu)t \mid \Hcall_t ] \\
\nonumber  &=&  \pr[\bin(n, 1 - \phi_{t-1}) \geq t \mid 1 - \phi_{t-1} < (1-\nu)t/n \cap \Hcall_t ] \pr[1 - \phi_{t-1} < (1-\nu)t/n \mid \Hcall_t ]\\
\nonumber  && +\:  \pr[\bin(n, 1 - \phi_{t-1}) \geq t \mid 1 - \phi_{t-1} \geq (1-\nu)t/n \cap \Hcall_t ] \pr[1 - \phi_{t-1} \geq (1-\nu)t/n \mid \Hcall_t ]\\
\nonumber &\leq & \pr[\bin(n, 1 - \phi_{t-1}) \geq t \mid 1 - \phi_{t-1} < (1-\nu)t/n \cap \Hcall_t ]  \\
&& + \pr[1 - \phi_{t-1} \geq (1-\nu) t/n \mid \Hcall_t ].
\eeqn
Furthermore, using the fact that the event $\{ 1 - \phi_{t-1} < (1 - \nu) t/n  \}$ is 
$\Hcall_t$-measurable, together with the stochastic ordering of the binomial distribution,
we obtain
\beq
\nonumber
\pr[\bin(n, 1 - \phi_{t-1}) \geq t \mid 1 - \phi_{t-1} <  (1-\nu) t/n \cap \Hcall_t ] 
\leq  \pr[\bin(n, (1-\nu) t/n) \geq t \mid \Hcall_t ].
\eeq
Taking the expectation with respect to the history ${\mathcal H}_t$ in (\ref{eq:prtt})
yields 
\[
{\mathbb P}[T(v_0) > t] \leq \pr[\bin(n, (1-\nu) t/n) \geq t] + \pr[1 - \phi_{t-1} \geq (1-\nu)t/n ].
\]
For $t = K_0 \log n$, where $K_0$ is a constant large enough and independent on the
initial node $v_0$, the Chernoff bound ensures that 
$\pr[\bin(n, (1-\nu) t/n) \geq t] = o(1/n)$.
%$\pr[\bin(n, 1 - \phi_{t-1}) \geq t \mid 1 - \phi_{t-1} < (1-\nu) t/n]  = o(n^{-1})$.
%
To bound $\pr[1 - \phi_{t-1} \geq (1-\nu) t/n \mid \Hcall_t ]$, use (\ref{eq:phi.geom.}) to obtain
\begin{eqnarray*}
\{ 1 - \phi_{t-1} \geq (1-\nu) t/n \} &=&\left  \{ \prod_{w \in W} q_w^{ \ind(\Gamma_w \leq t)} \leq 1-\frac{(1-\nu) t}{n}
\right \}\\
&=& \left \{ \sum_{w \in W} \log \left ( \frac{1}{1-p_w} \right ) \ind(\Gamma_w \leq t) \geq -\log
\left ( 1-\frac{(1-\nu) t}{n} \right ) \right \}.
\end{eqnarray*}
Linearize $-\log(1-(1-\nu) t/n) = (1-\nu) t/n +o(t/n)$ and define the bounded auxiliary random variables
$X_{t,w} = n \log(1/(1-p_w)) \ind(\Gamma_w \leq t)$.  Direct calculations reveal that
\begin{eqnarray}
\nonumber \E[X_{t,w}] &=& n  \log \Big(\frac{1}{1-p_w} \Big) (1-q_w^t) 
= n \Big( p_w + \o(p_w) \Big) \Big(1-(1-p_w)^t \Big)\\
&=& n \Big( p_w + \o(p_w) \Big) \Big(tp_w + \o(t p_w)) \Big)
= n t   p_w^2 + \o \Big( n t  p_w^2\Big), %\label{eq:bernstein}
\end{eqnarray}
which implies
\beq
 \sum_{w \in W}  \E[ X_{t,w}]  =  n t  \sum_{w \in W}  p_w^2 + \o \Big( n t \sum_{w \in W}  p_w^2\Big). 
\eeq
Thus under the stated condition that
\beq
\nonumber
n \sum_{w \in W} p_w^2 \leq c < 1,
\eeq
it follows that $0 < (1-c) t \leq t - \sum_{w \in W}  \E[ X_{t,w}]$.  In light of 
Bernstein's inequality~\cite{bernstein-1924}, we bound
\begin{eqnarray}
&& \pr[1 - \phi_{t-1} \geq (1-\nu) t/n] = {\mathbb P} \left [ \sum_{w \in W} X_{t,w} \geq (1-\nu) t \right ]
\leq  {\mathbb P} \left [ \sum_{w \in W} \big( X_{t,w}-{\mathbb E}[X_{t,w}] \big) \geq (1-\nu - c) t \right ] \nonumber \\
&\leq&
\exp\Big( -\frac{ \frac{3}{2} ((1-\nu - c) t )^2}{3 \sum_{w \in W} \mbox{Var}[X_{t,w}]  + n t\max_w \{p_w\} (1+\o(1)) } \Big). \label{eq:bernstein}
\end{eqnarray}
Since
\begin{eqnarray}
\nonumber \E[X_{t,w}^2] &=&  \Big(n  \log \Big(\frac{1}{1-p_w} \Big) \Big)^2 (1-q_w^t) = n^2 \Big( p_w + \o(p_w) \Big)^2 \Big(1-(1-p_w)^t \Big)\\
&=& n^2 \Big( p_w^2 + \o(p_w^2) \Big) \Big(tp_w + \o(t p_w)) \Big) =  n^2 t p_w^3 + \o \Big( n^2 t \sum_{w \in W}  p_w^3\Big), 
\end{eqnarray}
%\begin{eqnarray}
%\nonumber \E[X_{t,w}^2] &=&  \Big(n  \log \Big(\frac{1}{1-p_w} \Big) \Big)^2 (1-q_w^t)\\
%\nonumber &=& n^2 \Big( p_w + \o(p_w) \Big)^2 \Big(1-(1-p_w)^t \Big)\\
%\nonumber &=& n^2 \Big( p_w^2 + \o(p_w^2) \Big) \Big(tp_w + \o(t p_w)) \Big)\\
%&=& n^2 t p_w^3 + \o \Big( n^2 t \sum_{w \in W}  p_w^3\Big), 
%\end{eqnarray}
it follows that for some large constant $K_1>0$  
\beq
\nonumber
\sum_{w \in W}  \mbox{Var}[X_{t,w}] \leq \sum_{w \in W} \E[ X_{t,w}^2]  = n^2 t  \sum_{w \in W}  p_w^3 + \o \Big( n^2 t \sum_{w \in W}  p_w^3\Big) \leq K_1 t.
\eeq
%\begin{eqnarray*}
%\sum_{w \in W}  \mbox{Var}[X_{t,w}] &\leq&
%\sum_{w \in W} \E[ X_{t,w}^2]  \\
%&=&  n^2 t  \sum_{w \in W}  p_w^3 + \o \Big( n^2 t \sum_{w \in W}  p_w^3\Big)
%\leq K_1 t
%\end{eqnarray*}
Finally, the assumption of the theorem implies
that there exists constant $K_2 > 0$ such that
\beq
\nonumber
n \max_{w \in W} p_w \leq K_2.
\eeq
Substituting these bounds into (\ref{eq:bernstein}) yields
\beq
\nonumber
\pr[1 - \phi_{t-1} \geq (1-\nu) t/n] \leq \exp \left ( -\frac{3(1-\nu - c)^2}{2(3K_1+K_2)} t \right ),
\eeq
and taking $\nu \in (0,1-c)$ and $t=K_3 \log n$ for some constant $K_3$ large enough and not depending on the 
initial node $v_0$,  we conclude that
$\pr[1 - \phi_{t-1} \geq (1-\nu) t/n] = o(n^{-1})$, which in turn implies that taking constant $K_4 = \max \{ K_0,K_3\}$,
ensures that
\beq
\nonumber
\pr[T(v_0)> K_4 \log n] = \o(1/n)
\eeq
for any initial node $v_0$.
Finally, a union bound over the $n$ possible starting values $v_0$ implies that
\beq
\nonumber
{\mathbb P}[\max_{v_0 \in V} T(v_0) > K_4 \log n ] \leq n \o(n^{-1}) = o(1),
\eeq
which implies that all connected components in the random intersection 
are of size $O(\log n)$, \whp.
%with probability tending to one as $n$ tends to infinity.
\end{proof}

\textbf{Remarks.} We now consider the conditions of the theorem.
From the Cauchy-Schwarz inequality, we obtain $\Big(\sum_{w \in W}  p_w^3\Big) \Big(\sum_{w \in W}  p_w\Big) \geq \Big(\sum_{w \in W}  p_w^2\Big)^2$. Moreover, given that $\sum_{w \in W}  p_w^3 = O(1/n^2)$ and $p_w = O(1/n)$, it follows $\sum_{w \in W}  p_w^2 = \Omega(\sqrt{m/n^3})$. Hence, for $\sum_{w \in W}  p_w^2 = c/n$, when $c<1$, it follows $m = \Omega(n)$, which is consistent with the results in~\cite{behrisch-2007-component} on the non-existence of a giant component in a uniform RIG.

\subsection{Supercritical regime}
\label{sub.supercritical.regime}

We now turn to the study of the supercritical regime in which 
$\lim_{n \rightarrow \infty} n \sum_{w \in W}  p_w^2 = c > 1$.  

\begin{theorem}
\label{thm.super.subcritical.regime}
%Let $G$ be a general random intersection graph whose node-attribute probabilities $p_w$ satisfy
Let 
\beq
\nonumber
%\sum_{w \in W}  p_w^3 = O(\frac{1}{n \log n}) \quad \mbox{ and } \quad p_w = O(1/\log n)
\sum_{w \in W}  p_w^3 = o\Big(\frac{\log n}{n^2}\Big) \quad \mbox{ and } \quad p_w = o\Big(\frac{\log n}{n}\Big), \quad \mbox{ for all } w. 
\eeq
%Suppose that there exists a constant $c>1$ such that for $n$ large enough,
For any constant $c>1$, if 
$\sum_{w \in W}  p_w^2 \geq c/n$, then {\whp} 
there exists a unique largest component in $G(n,m,\boldsymbol{p})$, of order $\Theta(n)$. 
Moreover, the size of the giant component is given by $n \zeta_c (1 + \o(1))$,
where $\zeta_c$ is the solution in $(0,1)$ of the equation $1-e^{-c \zeta} = \zeta$,
while all other components are of size $O(\log n)$.
\end{theorem}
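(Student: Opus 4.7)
The plan is to extend the auxiliary-process analysis of Theorem~\ref{thm.sub.subcritical.regime} to the supercritical case by establishing a dichotomy: for every starting vertex $v_0$, \whp\ either $T(v_0) = O(\log n)$ or $T(v_0) = n\zeta_c(1+o(1))$, with the latter occurring with probability $\zeta_c(1+o(1))$. A sprinkling argument then forces all large-$T$ vertices to lie in a single component, yielding the unique giant.

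First I would set up a Galton--Watson approximation for $Y_t$. From the moment formulas~(\ref{eq.phi.zero.exp.})--(\ref{eq.phi.zero.sqaure.exp.}) combined with Chebyshev's inequality, $\phi_0 = 1 - c/n + o(1/n)$ \whp, so $N_0 r_0 \to c$. More generally, using the representation $\phi_{t-1} = \prod_{w \in W} q_w^{\ind(\Gamma_w \leq t-1)}$ from~(\ref{eq:phi.geom.}) together with a Bernstein bound in the spirit of~(\ref{eq:bernstein})---enabled by $\sum_w p_w^3 = o(\log n / n^2)$ and $p_w = o(\log n / n)$---I would show that $\phi_{t-1} = e^{-ct/n}(1+o(1))$ uniformly for $t = O(n)$, \whp. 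Lemma~\ref{lm:nt} then yields $\E[Y_t] = (n-1)(1-e^{-ct/n}) - t + 1 + o(n)$, so with $s = t/n$ the scaled mean trajectory is $y(s) = 1 - e^{-cs} - s$, whose first positive zero is precisely $\zeta_c$. In the early phase $Y_t$ is asymptotically a Galton--Watson process with $\mathrm{Poisson}(c)$ offspring, whose survival probability equals $\zeta_c$.

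Next I would prove the dichotomy. On the non-survival event the process is stochastically dominated by its subcritical conjugate (mean offspring $c e^{-c\zeta_c}<1$), and the Bernstein/Chernoff machinery of Theorem~\ref{thm.sub.subcritical.regime} yields $T(v_0) = O(\log n)$ \whp. On the survival event, once $Y_t$ exceeds a threshold $k\log n$, an Azuma--Hoeffding bound on the Doob martingale of $Y_t$ pins it within $o(n)$ of $ny(t/n)$ up to $t = n\zeta_c + o(n)$, giving $T(v_0) = n\zeta_c(1+o(1))$ \whp. Setting $X = \#\{v_0 : T(v_0) \geq k\log n\}$, linearity yields $\E[X] = n\zeta_c(1+o(1))$, and the conditional pairwise independence~(\ref{eq:two-nodes-ind}) applied to paired explorations truncated at $k\log n$ steps makes the covariances $o(1)$, so $\var(X) = o(n^2)$ and Chebyshev forces $X = n\zeta_c(1+o(1))$ \whp.

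Finally, for uniqueness I would sprinkle: decompose $1-p_w = (1-p_w')(1-p_w'')$ with $p_w'' = \varepsilon p_w$ for small $\varepsilon>0$, so that $G(n,m,\boldsymbol{p})$ is the overlay of two independent RIGs. Apply the above analysis to the still-supercritical $\boldsymbol{p}'$-RIG (with constant $c(1-\varepsilon)^2 > 1$) to identify the set of large vertices, then verify that any two disjoint subsets of $V$ of linear size share a common attribute under $\boldsymbol{p}''$ \whp, which merges all large $\boldsymbol{p}'$-components inside $G(n,m,\boldsymbol{p})$. The $O(\log n)$ bound on the remaining components follows from the non-survival side combined with a union bound over $v_0$. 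The principal obstacle is the long-range dependence carried by $\Hcall_t$: unlike in $G_{n,p}$ where independent edges give an immediate branching-process coupling, the rate $r_t$ here is driven by the global trajectory $\phi_{t-1}$, and controlling its fluctuations uniformly on a linear time window is precisely where the hypothesis $\sum_w p_w^3 = o(\log n / n^2)$ is used in an essential way.
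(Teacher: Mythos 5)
Your proposal rests on the same foundations as the paper's proof---both run the auxiliary process $\{Y_t\}$, concentrate $\phi_{t-1}$ through the variables $X_{t,w}$ under the hypotheses on $\sum_w p_w^3$ and $\max_w p_w$, and settle uniqueness by sprinkling---but the middle and the end take genuinely different routes. For existence and size of the giant, the paper never analyzes the trajectory $y(s)=1-e^{-cs}-s$ directly: it lower-bounds $1-\prod_{\tau}(1-r_\tau)$ via Jensen's inequality (see (\ref{eq:rate.sandwich})), converts the concentration of $\phi_{t-1}$ into a two-sided bracket $1-(1-p^{-})^t \le 1-\prod_\tau(1-r_\tau)\le 1-(1-p^{+})^t$ with $p^{\pm}=c^{\pm}/n$ and $c^{-}<c<c^{+}$ both exceeding $1$, and then imports wholesale the known facts about the branching processes of $G_{n,p^{-}}$ and $G_{n,p^{+}}$ (uniqueness, size $\zeta_\lambda n$, the CLT (\ref{eq:clt.distr.})), using continuity of the solution of (\ref{eq.tree.size}) in $\lambda$. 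Your dichotomy-plus-second-moment argument re-derives these facts from scratch inside the RIG; it is more self-contained and would give a sharper count of the vertices in the giant, at the cost of more work. Your sprinkling is also structurally different from, and arguably cleaner than, the paper's: you factor $1-p_w=(1-p_w')(1-p_w'')$ so the overlay is exactly $G(n,m,\boldsymbol{p})$, whereas the paper adds an independent layer $\hat p_w=p_w^{\gamma}$ on top and argues by contradiction about $\mathrm{RIG}_{new}$, which is not the original graph.

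Two points need repair before your route goes through. First, the covariance control for $X=\#\{v_0: T(v_0)\ge k\log n\}$ does not follow from (\ref{eq:two-nodes-ind}), which only asserts conditional independence of two nodes' adjacency to a common third node given that node's attribute set. Two truncated explorations started at $u$ and $v$ are correlated through every attribute either of them uncovers, so you must additionally show that the $O(\log n)$ vertices revealed by the first exploration expose an attribute set whose total weight is small enough (of order $\log n\cdot\max_w p_w=o(\log^2 n/n)$ under the stated hypotheses) that the second exploration misses it \whp; as written this step is asserted, not proved. Second, in the merging step the statement that \emph{all} pairs of disjoint linear-size subsets of $V$ share a common $\boldsymbol{p}''$-attribute \whp{} fails for small $\varepsilon$ under a union bound over $2^{O(n)}$ pairs; you should apply it only to the $O(1)$ linear-size components of the $\boldsymbol{p}'$-graph, which are measurable with respect to the first layer and hence independent of the second. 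With those two repairs the argument is sound.
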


\noindent
\textbf{Remarks.}  The conditions on $p_w$ and $\sum_w p_w^3$ are weaker than 
ones in the case of the sub-critical regime. 

The proof proceeds as follows.   The first step is to bound, both from above and
below, the value  $1 - \prod_{\tau = 0}^{t-1}(1-r_\tau)$ that governs the behavior
the branching process  $\{Y_t\}_{t \geq 0}$, see Lemma~\ref{lm:nt}.  With the lower bound,
we show the emergence with high probability of at least one giant component 
of size $\Theta(n)$.  We use the upper bound to prove uniqueness of the giant component.
Technically, we make use of these bounds to compare our branching process to 
branching processes arising in the study of Erd\H{o}s-R\'eneyi random graphs.

\begin{proof}
We start by bounding $1 - \prod_{\tau = 0}^{t-1}(1-r_\tau)$.  The upper bounds 
$\sum_{\tau=0}^{t-1} r_{\tau}$ has been previously established in (\ref{eq.rate.upper.bound}).
For the lower bound, we apply Jensen's inequality to the function 
$\log (1-x)$ to get
\begin{eqnarray}
\nonumber \log \prod_{\tau = 0}^{t-1} (1-r_\tau) &=& \sum_{\tau = 0}^{t-1} \log (1-r_\tau)
= \sum_{\tau = 0}^{t-1} \log \Big( 1- (\phi_{\tau-1} - \phi_{\tau}) \Big) \\
%\leq t \log\Big( 1- \frac{1}{t}\sum_{\tau=0}^{t-1}(\phi_{\tau-1} - \phi_{\tau})\Big) \\
%&=& t \log\Big( 1- \frac{\phi_{-1} - \phi_{t-1}}{t}\Big) = t \log\Big( 1- \frac{1 - \phi_{t-1}}{t}\Big).
&\leq& t \log\Big( 1- \frac{1}{t}\sum_{\tau=0}^{t-1}(\phi_{\tau-1} - \phi_{\tau})\Big) = t \log\Big( 1- \frac{1 - \phi_{t-1}}{t}\Big).
\end{eqnarray}
In light of (\ref{eq:phi.geom.}), $\phi_t$ is decreasing in $t$, and hence
\begin{equation}
\label{eq:rate.sandwich}
1- \Big( 1- \frac{1 - \phi_{0}}{t}\Big)^t \leq 1- \Big( 1- \frac{1 - \phi_{t-1}}{t}\Big)^t  \leq 1 - \prod_{\tau = 0}^{t-1}  (1-r_\tau) \leq \sum_{\tau=0}^{t-1} r_{\tau} = 1 - \phi_{t-1}.
\end{equation}
To further bound $1- \Big( 1- \frac{1 - \phi_{0}}{t}\Big)^t$, consider the function 
$f_t(x) = 1 - (1 - x/t)^t$ for $x$ in a neighborhood of the origin and $t \geq 1$.
For any fixed $x$, $f_t(x)$ decreases to $1-e^{-x}$ as $t$ tends to infinity.  The
latter function is concave, and hence for all $x \leq \varepsilon$, 
\beq
\nonumber
\frac{1-e^{-\varepsilon}}{\varepsilon} x \leq f_t(x).
\eeq
Note that $(1-e^{-\varepsilon})/\varepsilon$ can be made arbitrary close to one by taking 
$\varepsilon$ small enough.
Furthermore, $f_t(x)$ is increasing in $x$ for fixed $t$. 
From (\ref{eq:phi.geom.}), $1-\phi_0 \leq 1- \phi_t$, hence
$1- ( 1- \frac{1 - \phi_{0}}{t})^t \leq 1- ( 1- \frac{1 - \phi_{t-1}}{t})^t$. 
Looking closer at $1 - \phi_0$, from (\ref{eq.phi.zero.sqaure.exp.}) and (\ref{eq.phi.zero.exp.}), by using Chebyshev inequality, with $\sum_{w \in W} p_w^2 = c/n$, it follows that $\phi_0$, is concentrated around its mean $\E[\phi_0] = c/n$. That is, for any constant $\dl>0$, $\phi_0 \in ((1 - \delta)c/n, (1 + \dl)c/n)$, with probability $1 - o(1/n)$. 
We conclude that for any $\dl>0$ there is $\eps>0$ such that $(c - \dl)\frac{1 - e^{-\eps}}{\eps} > 1$, since constant $c>1$. Moreover, since $\lim_{\eps \to 0} \frac{1 - e^{-\eps}}{\eps} = 1$, by choosing $\eps$ sufficiently small, $\frac{1 - e^{-\eps}}{\eps}$ can be arbitrarily close to $1$.
It follows that $1 - \prod_{\tau = 0}^{t-1}  (1-r_\tau) > c'/n$, for some constant $c>c'>1$
arbitrarily close to $c$. 
Hence, the branching process on RIG is stochastically lower bounded by the $\bin(n-1, c'/n)$, which stochastically dominates a branching process on $G_{n,c'/n}$.  Because $c^\prime > 1$, 
there exists \whp \mbox{} a giant component of size $\Theta(n)$ in $G_{n,c'/n}$.   This implies
that the stopping of the branching process associated to $G_{n,c'/n}$ is $\Theta(n)$
with high probability, and so is the stopping time $T_{v}$ for some $v \in V$, 
which implies that there is a giant component in a general RIG, \whp. 

Let us look closer at the size of that giant component. 
From the representation (\ref{eq:phi.geom.}) for $\phi_{t-1}$, consider the previously introduced random variables $X_{t,w} = n \log(1/(1-p_w)) \ind(\Gamma_w \leq t)$.   
Similarly, as in the proof of the Theorem~\ref{thm.sub.subcritical.regime},
it follows that  under the conditions of the theorem
there is a positive constant $\dl>0$ such that $\sum_w X_{t,w}$ is concentrated within $(1\pm \delta)\sum_w\E[X_{t,w}] = 
(1 \pm \delta)c/n$, with probability $1 - o(1)$.
%
%It follows that under the conditions of the theorem, 
Hence, there exists $p^{+} = c^{+}/n$, for some 
constant $c^{+}>c>1$, such that $1 - \phi_{t-1} \leq 1 - (1 - p^{+})^t$, which is equivalent to $- \log \phi_{t-1} \leq t \log (1 - p^{+}) = tp^{+} + \o(tp^{+}) = t c^{+}/n + \o(t/n)$. 
Similarly, the concentration of $\phi_{t-1}$ implies that there  exists 
$p^{-} = c^{-}/n$, with $c > c^{-}>1$, such that $1 - (1 - p^{-})^t \leq 1- ( 1- (1 - \phi_{t-1})/t)^t$, 
which implies that $- \log \phi_{t-1} \geq t \log (1 - p^{-}) = tp^{-} + \o(tp^{-}) = t c^{-}/n + \o(t/n)$. 
Combining the upper and lower bound, we conclude that with probability $1 - o(1)$,
the rate of the branching process on RIG is bracketed by
\beq
\label{eq:rate.better.sandwich}
1 - (1 - p^{-})^t  \leq 1 - \prod_{\tau = 0}^{t-1}  (1-r_\tau) \leq 1 - (1 - p^{+})^t.
\eeq
The stochastic dominance of the Binomial distribution 
together with (\ref{eq:rate.better.sandwich}), implies
\beqn
\label{eq:sandwich.stopping.time}
%\nonumber && \pr \Big[\bin \Big(n-1,1 - \Big(1 - \frac{c(1-\dl - \eps)}{n} \Big)^t \Big) \geq t\Big] \leq \pr \Big[ \bin \Big(n-1,1 - \prod_{\tau=0}^{t-1} (1-r_\tau) \Big) \geq t\Big] \\
%&\leq& \pr\Big[ \bin \Big(n-1,1 - \Big(1 - \frac{c(1+\dl + \eps)}{n} \Big)^t \Big) \geq t\Big].
\nonumber \pr \Big[\bin \Big(n-1,1 - (1 - p^{-})^t \Big) \geq t\Big] &\leq& \pr \Big[ \bin \Big(n-1,1 - \prod_{\tau=0}^{t-1} (1-r_\tau) \Big) \geq t\Big] \\
&\leq& \pr\Big[ \bin \Big(n-1,1 - (1 - p^{+})^t \Big) \geq t\Big].
\eeqn
%

%
%RIG cannot be interpreted as a $G_{n,p}$, with a certain edge probability, since the edges in RIG are not independent, see (\ref{eq.rig.depend.}). 
%
In light of (\ref{eq:rate.better.sandwich}), the branching process $\{Y_t\}_{t \geq 0}$ associated to a RIG 
is stochastically bounded from below and form above 
by the branching processes associated to $G_{n,p^{-}}$ and $G_{n,p^{+}}$, respectively (for the analysis on an Erd\H{o}s-R\'enyi graph, see~\cite{alon-2000-probabilistic}).
%Since $c^{-}, c^{+}$ are such that there are giant components in $G_{n,p^{-}}$ and $G_{n,p^{+}}$  \textbf{whp}, it follows that there must exist a giant component in our RIG,  \textbf{whp}. 
Since both $c^{-}, c^{+}>1$, there exist giant components in both
$G_{n,p^{-}}$ and $G_{n,p^{+}}$, \textbf{whp}.  

In \cite{hofstad-notes}, it has been shown that the giant components in $G_{n,\lm/n}$, for $\lm>1$, is unique
and of size $\approx n \zl$, where $\zeta_\lambda$
is the unique solution from $(0,1)$ of the equation 
\beq
\label{eq.tree.size}
1-e^{-\lm \zeta} = \zeta.
\eeq 
Moreover,  the size of the giant component in $G_{n,\lm/n}$ satisfies the central
limit theorem 
\beq
\label{eq:clt.distr.}
\frac{\max_{v} \{|C(v)\}| - \zl n}{\sqrt{n}}  \stackrel{d}{=} \Ncal\Big(0,\frac{\zl (1-\zl)}{(1-\lm + \lm \zl)^2}\Big).
\eeq
From the definition of the stopping time, see (\ref{eq:prob.stopping.time}), and since (\ref{eq:sandwich.stopping.time}) and (\ref{eq:clt.distr.}), it follows  there is a giant component in a RIG, of size, at least, $n \zl (1 - \o(1))$, \whp. 
Furthermore, the stopping times of the branching processes associated
to $G_{n,p^{-}}$ and $G_{n,p^{+}}$ are approximately $\zeta n$, where $\zeta$ satisfy (\ref{eq.tree.size}), with $\lm^{-} = n p^{-}$ and $\lm^{+} = n p^{+}$, respectively. 
%More precisely the size of the giant component in $G_{n,p^{-}}$, $G_{n,p^{+}}$ will follow (\ref{eq:clt.distr.}). 
These two stopping times are close to one another, which follows from analyzing the
function $F(\zeta,c) = 1-\zeta - e^{-c\zeta}$, where $(\zeta,c)$ is the solution of 
$F(\zeta,c) = 0$, for given $c$.  Since all partial derivatives of $F(\zeta,c)$ are continuous and bounded, the stopping times of the branching processes defined from 
$G_{n,p^{-}}$, $G_{n,p^{+}}$ are `close' to the solution of (\ref{eq.tree.size}), for $\lm=c$.
From (\ref{eq:sandwich.stopping.time}), the stopping time of a RIG is bounded by the stopping times on $G_{n,p^{-}}$, $G_{n,p^{+}}$.

We conclude by proving that \whp,
%with high probability, 
the giant component of a RIG is unique by adapting the arguments 
in~\cite{alon-2000-probabilistic} to our setting.
Let us assume that there are at least two giant components in a RIG, with 
the sets of nodes $V_1, V_2 \subset V$. 
Let us create a new, independent `sprinkling'  $\widehat{\textrm{RIG}}$ on the top of our RIG, 
with the same sets of nodes and attributes, while $\hat{p}_w = p_w^\gamma$, for $\gamma>1$ to be defined later. 
Now, our object of interest is $\textrm{RIG}_{new} = \textrm{RIG} \cup \widehat{\textrm{RIG}}$. 
Let us consider all  $\Theta(n^2)$  pairs  $\{v_1,v_2\}$, where $v_1 \in V_1, v_2 \in V_2$, which are independent in $\widehat{\textrm{RIG}}$,
(but not in RIG), hence 
%
%From (\ref{eq.connection}), 
the probability that two nodes $v_1, v_2 \in V$ are connected in $\widehat{\textrm{RIG}}$ is given by 
\beq
\label{eq:edge.prob.in.righat}
1 - \prod_w (1 - \hat{p}_w^2) = 1 - \prod_w (1 - p_w^{2\gamma}) =  \sum_w p_w^{2\gamma} + \o(\sum_w p_w^{2\gamma}),
\eeq
which is true, since $\gamma>1$ and $p_w = O(1/n)$ for any $w$. Given that $\sum_w p_w^2 = c/n$, we choose $\gamma>1$ so that $\sum_w p_w^{2\gamma} = \omega(1/n^2)$.
Now, by the Markov inequality, \whp \mbox{}  there is a pair $\{v_1,v_2\}$ such that $v_1$ is connected to $v_2$ in $\widehat{\textrm{RIG}}$, 
implying that $V_1, V_2$ are connected, \textbf{whp}, forming one connected component within $\textrm{RIG}_{new}$. 
From the previous analysis, it follows that this component is of size at least $2n \zl (1 - \dl)$ for any small constant $\dl>0$. 
On the other hand, the probabilities $p_w^{new}$ in $\textrm{RIG}_{new}$ satisfy 
\beq
\nonumber
p_w^{new}= 1 - (1-p_w)(1-\hat{p}_w) = p_w + \hat{p}_w(1-p_w) = p_w + p_w^{\gamma}(1-p_w) = p_w (1 + \o(1)),
\eeq
which is again true, since $\gamma>1$ and $p_w = O(1/n)$ for any $w$. Thus, 
\beq
\sum_{w \in W} (p^{new}_w)^2 = \sum_{w \in W}  p_w^2 + \Theta(\sum_{w \in W} p_w^{1+\gamma} (1 - p_w)) =  \sum_{w \in W} p_w^2(1+\o(1)) = c/n + o(1/n).
\eeq 
Given that the stopping time on RIG is bounded by the stopping times on $G_{n,p^{-}}$, $G_{n,p^{+}}$, and from its  continuity, it follows that the giant component in $\textrm{RIG}_{new}$ cannot be of size $2n \zl (1 - \dl)$, which is a contradiction. 
Thus, there is only one giant component in RIG, of size given by $n \zeta_c (1 + \o(1))$, where $\zeta_c$ satisfies (\ref{eq.tree.size}), for $\lm = c$. Moreover, knowing behavior of $G_{n,p}$, from (\ref{eq:sandwich.stopping.time}), it follows that all other components are of size $O(\log n)$.
\end{proof}

\section{Conclusion}
\label{sec.conclusion}

The analysis of random models for bipartite graphs is important for the
study of social networks, or any network formed by associating nodes with
shared attributes.  In the random intersection graph (RIG) model, nodes have certain attributes with
fixed probabilities.  
In this paper, we have considered the general RIG model, where these probabilities are
represented by 
%a vector quantity $\boldsymbol{p}$, whose $w$-th component $p_w$ denotes the probability that a node has attribute $w$.
a set of probabilities $\boldsymbol{p} = \{p_w\}_{w \in W}$, where $p_w$
denotes the probability that a node is attached to the attribute $w$.

We have analyzed the evolution of components in general RIGs, giving
conditions for existence and uniqueness of the giant component.  We have
done so by generalizing the branching process argument used to study the
birth of the giant component in Erd\H{o}s-R\'enyi graphs.  We have
considered a dependent, inhomogeneous Galton-Watson process, where the
number of offspring follows a binomial distribution with a different
number of nodes and different rate at each step during the evolution.
The analysis of such a process is complicated by the dependence on its
history, dictated by the structure of general RIGs. 
%
%More precisely, the technical difficulty, in analyzing the stopping time of the branching process, rests on the fact that it depends on the history of the process. 
We have shown that in spite of this difficulty, it is possible
to give stochastic bounds on the branching process, and that under
certain conditions the giant component appears at the
threshold $n \sum_{w \in W}  p_w^2 = 1$,  with probability tending to one, as the number of nodes tends to infinity.
%\textbf{whp}.  

\section*{Acknowledgments}
Part of this work was funded by the Department of Energy at Los
Alamos National Laboratory under contract DE-AC52-06NA25396 through the
Laboratory-Directed Research and Development Program, and by the National
Science Foundation grant CCF-0829945. Nicolas W. Hengartner was supported by DOE-LDRD 20080391ER.

\bibliographystyle{acm}
\bibliography{intergraph}
\end{document}